\newtheorem{theorem}{Theorem}
\newtheorem{lemma}{Lemma}
\newtheorem{definition}{Definition}
\newtheorem{example}{Example}[section]
\newcommand\todo[1]{\{\textbf{Todo:} \textit{#1}\}}
\newcommand\note[1]{\{\textbf{Note:} \textit{#1}\}}
\newcommand{\mypara}[1]{\vspace*{0.06in}\noindent\textbf{#1} \xspace}
\newcommand{\myspara}[1]{\vspace*{0.06in}\noindent\emph{#1} \xspace}
\newcommand{\difp}{differential privacy\xspace}
\newcommand{\chitest}{\ensuremath{\chi^2} correlation test\xspace}
\newcommand{\Lap}[1]{\ensuremath{\mathsf{Lap}\left(#1\right)}\xspace}
\newcommand{\cor}[1]{\ensuremath{\mathsf{Cor}\left(#1\right)}\xspace}
\newcommand{\HG}[1]{\ensuremath{\mathsf{H}\left(#1\right)}\xspace}
\newcommand{\HS}{\ensuremath{\mathsf{selectHist}}\xspace}
\newcommand{\HP}{\ensuremath{\mathsf{perturbHist}}\xspace}
\newcommand{\FS}{\ensuremath{\mathsf{selectFeature}}\xspace}
\renewcommand{\Pr}[1]{\ensuremath{\mathsf{Pr}\left[#1\right]}\xspace}
\newcommand{\myexp}[1]{\ensuremath{e^{#1}}\xspace}
\renewcommand{\AA}{\ensuremath{\mathcal{A}}\xspace}
\newcommand{\hist}{\ensuremath{\mathsf{H}}\xspace}
\newcommand{\nhist}{\ensuremath{\tilde{\mathsf{H}}}\xspace}
\newcommand{\nhc}{\ensuremath{\mathit{HC}^{\tilde{\mathsf{H}}(D,g)}}\xspace}
\newcommand{\pfc}{PrivPfC\xspace}
\newcommand{\dgen}{DiffGen\xspace}
\newcommand{\pbayes}{PrivBayes\xspace}
\newcommand{\pph}{PPH\xspace}
\newcommand{\pgene}{PrivGene\xspace}
\newcommand{\dpc}{DiffPC-4.5\xspace}
\newcommand{\perm}{PrivateERM\xspace}
\newcommand{\pview}{PriView\xspace}
\newcommand{\pfcSelNF}{PrivPfC-SelNF\xspace}
\newcommand{\pfcFSNF}{PrivPfC-FSNF\xspace}
\newcommand{\dgenSNF}{DiffGen-Struct-NF\xspace}
\newcommand{\dgenPNF}{DiffGen-Pert-NF\xspace}
\newcommand{\dgenNF}{DiffGen-NF\xspace}
\newcommand{\pbayesSNF}{PrivBayes-Struct-NF\xspace}
\newcommand{\pbayesPNF}{PrivBayes-Pert-NF\xspace}
\newcommand{\pbayesNF}{PrivBayes-NF\xspace}
\begin{document}
\title{Differentially Private Projected Histograms of Multi-Attribute Data for Classification}

\author{%
	{Dong Su$^{~\#}$, Jianneng Cao$^{~\star}$, Ninghui Li$^{~\#}$}%
	\vspace{1.6mm}\\
	\fontsize{10}{10}\selectfont\itshape
	$^{\#}$\,
    Purdue University \hspace{3cm}$^{\star}$\,Institute for Infocomm Research\\
	\fontsize{9}{9}\selectfont\ttfamily\upshape
	\hspace{-1cm}\{su17, ninghui\}@cs.purdue.edu \hspace{1.5cm}caojn@i2r.a-star.edu.sg%
}

\maketitle

\begin{abstract}
In this paper, we tackle the problem of constructing a differentially private synopsis for the classification analyses.  Several the state-of-the-art methods follow the structure of existing classification algorithms and are all iterative, which is suboptimal due to the locally optimal choices and the over-divided privacy budget among many sequentially composed steps.  Instead, we propose a new approach, \pfc, a new differentially private method for releasing data for classification.  The key idea is to privately select an optimal partition of the underlying dataset using the given privacy budget in one step.  Given one dataset and the privacy budget, \pfc constructs a pool of candidate grids where the number of cells of each grid is under a data-aware and privacy-budget-aware threshold.  After that, \pfc selects an optimal grid via the exponential mechanism by using a novel quality function which minimizes the expected number of misclassified records on which a histogram classifier is constructed using the published grid.  Finally, \pfc injects noise into each cell of the selected grid and releases the noisy grid as the private synopsis of the data.   If the size of the candidate grid pool is larger than the processing capability threshold set by the data curator,  we add a step in the beginning of \pfc to prune the set of attributes privately.  We introduce a modified $\chi^2$ quality function with low sensitivity and use it to evaluate an attribute's relevance to the classification label variable.  Through extensive experiments on real datasets, we demonstrate \pfc's superiority over the state-of-the-art methods. 

\end{abstract}

\section{Introduction}

We study the problem of publishing histograms of datasets while satisfying differential privacy.  
A histogram is an important tool for summarizing data, and can serve as the basis for many data analysis tasks.  Publishing noisy histograms for one-dimensional or two-dimensional datasets have been studied extensively in recent years~\cite{HRMS10,XXY10, CPS+11, Lei11, CSS11,XWG11,Vinterbo12, CPS+12, QYL13a, QYL13b}.  However, as noticed in~\cite{Vinterbo12,QYL13a}, these approaches do not work well when the number of attributes/dimensions goes above a few.  Many datasets that are of interest have multiple attributes.  In this paper, we focus on multi-attribute datasets that have dozens of attributes, some of categorical and some are numerical.

For such a multi-attribute dataset, it is infeasible to publish a histogram with all the attributes, therefore it is necessary to select a subset of the attributes that are ``interesting'' for some intended data analysis tasks, and to determine how to discretize the attributes.  These selections partition the domain into a number of cells.  We call the result a ``grid''.  
We consider a common optimization objective, where the dataset includes a label attribute and our goal is to ensure that classifiers that are accurate for the original dataset can be learnt from the published noisy histograms.  Classification is an important tool for data analysis, and differentially private algorithms for learning classifiers have been considered an important problem, with many recent attempts~\cite{BDMN05, CM08, FS10, MCFY11, CMS11, JJW+12, JPW12, Vinterbo12, ZXY+13, ZCP+14}. 
In this paper we propose the \pfc (Private Publication for Classification) approach for publishing projected histograms.  The key novelty is to privately select a high-quality grid in a single step, while adapting to the privacy budget $\epsilon$.  We construct a set of candidate grids where the number of cells is under a certain threshold (determined by the dataset size and the privacy budget), and then use the exponential mechanism to select one grid using a novel quality function that minimizes expected number of misclassified records when a histogram classifier is constructed using the published histogram.  
By construction, our quality function considers the impact of injected Laplace noise to the histogram on the classification accuracy.

For high dimensional datasets, the size of the set of candidate grids might be larger than the processing capacity of the data curator.  We add a feature selection step in the beginning to prune the set of attributes.  This step first privately selects a small number of attributes which are most relevant to the class attribute by employing the exponential mechanism.  
We introduce a modified $\chi^2$ correlation function that has low sensitivity while evaluating an attribute's relevance to the classification label variable.  This feature selection step enables our \pfc framework to scale to higher dimensional datasets.


Our proposed \pfc outputs a histogram that can be used to generate synthetic data for multiple data analysis tasks, while being optimized for data classification.  We show the effectiveness of \pfc by comparing it with several other approaches that output a classifier in a differentially private fashion.

For evaluation, we use two common classification algorithms, the decision tree and the SVM, because these have been used in the literature on learning classifiers while satisfying the differential privacy.  Extensive experiments on real datasets show that \pfc consistently and significantly outperforms other state-of-the-art methods. 

The contributions of this paper are summarized as follows:
\begin{enumerate}
\item We propose \pfc, a novel framework for publishing data for classification under differential privacy.  As part of \pfc, we introduce a new quality function that enables the selection of a good ``grid'' for publishing noisy histograms.  We also introduce a way to enable private selection of most relevant features for classification, and to enable \pfc to scale to higher-dimension datasets.

\item Through extensive experiments on real datasets, we have compared \pfc against several other state-of-the-art methods for data publishing as well as private classification, demonstrating that \pfc improves the state-of-the-art.  
\end{enumerate}

The rest of the paper is organized as follows. In Section 2, we review the related work.  Our \pfc approach is presented in Section 3.  We report experimental results in Section 4.  Section 5 concludes our work.

\section{Related Work}\label{sec:related}
The notion of differential privacy was developed in a series of papers \cite{DN03,DN04,BDMN05,DMNS06,Dwo06}.  
There are several primitives for satisfying $\epsilon$-differential privacy.  In this paper we use two of them.  The first primitive is the \emph{Laplacian mechanism}~\cite{DMNS06}.  It adds noise sampled from a Laplace distribution to a statistic $f$ to be released.  The scale of the Laplace distribution is proportional to $\mathsf{GS}_f$, the \emph{global sensitivity} or the $L_1$ sensitivity of $f$. 
Another primitive is to sample the output of the data analysis mechanism according to an exponential distribution; this is generally referred to as the \emph{exponential mechanism}~\cite{MT07}.  The mechanism relies on a \emph{quality} function $q: \mathcal{D} \times \mathcal{R} \rightarrow \mathbb{R}$ that assigns a real valued score to one output $r\in\mathcal{R}$ when the input dataset is $D$, where higher scores indicate more desirable outputs.  Given the quality function $q$, its global sensitivity $\mathsf{GS}_q$ is defined as:
\begin{equation*}
\mathsf{GS}_q = \max_{r} \max_{D\simeq D'} |q(D,r) - q(D',r)|.
\end{equation*}

The following method satisfies $\epsilon$-differential privacy:
\begin{equation*}
 \Pr{r\mbox{ is selected}} \propto \myexp{\left(\frac{\epsilon }{2\,\mathsf{GS}_q}q(D,r)\right)}.  \label{eq:exp}
\end{equation*}

There has been a large body of works on differentially private histogram construction for answering range queries or marginal queries~\cite{DMNS06, XWG11, HRMS10, CPS+12, XuZXYY12, QYL13a, LHMW14, QYL14, ZCP+14}. 

\mypara{Differentially Private Classification.} Differentially private classification has received growing attention in the research community~\cite{BDMN05, CM08, FS10, MCFY11, CMS11, JJW+12, JPW12, Vinterbo12, ZXY+13, ZCP+14}.  Blum et al.~\cite{BDMN05} suggested a solution for constructing the private version of the ID3~\cite{Quinlan86} decision tree classifier.  When the ID3 algorithm needs to get the number of tuples with a specific feature value, it queries the SuLQ interface to get the corresponding noise count.  Friedman and Schuster~\cite{FS10} improved this approach by redesigning the classic ID3 classifier construction algorithm to consider the feature quality function with low sensitivity and using exponential mechanism to evaluate all the attributes simultaneously.  Chaudhuri et al.~\cite{CM08, CMS11} proposed a differentially private logistic regression algorithm and later generalized this idea to address the private empirical risk minimization which can be applied to a wider range of classification problems, such as SVM classification.  Zhang et al.~\cite{ZXY+13} proposed \pgene, a general private model fitting framework based on genetic algorithms, that can be applied to the SVM classification and the logistic regression.  

Besides the above interactive methods for constructing differentially private classifiers, several works proposed solutions to publish data for classification analysis tasks.  Mohammed et al.~\cite{MCFY11} proposed the \dgen algorithm which first partitions the data domain by iteratively selecting attributes and ways to discretize the attributes,  and then injects Laplace noise into each cell of all the leaf partitions.  Vinterbo~\cite{Vinterbo12} proposed another data publishing algorithm, called Private Projected Histogram (\pph).  \pph first decides how many attributes are to be selected, then incrementally selects attributes via the exponential mechanism to maximize the discernibility of the selected attributes.  For each categorical attribute, the full domain is used.  For numerical attribute, it uses the formula proposed in Lei~\cite{Lei11} to decide how many bins to discretize them.  In this method, the number of attributes and how attributes are partitioned are independent of the privacy budget.  Furthermore, all selected attributes are treated equally.  Zhang et al.~\cite{ZCP+14} presented \pbayes which constructs a private a Bayesian network through iteratively selecting sets of attributes that have maximum mutual information via the exponential mechanism.  It then injects Laplace noise to perturb each conditional distribution of the network.  We will further analyze the above approaches and compare our proposed method with them in the later sections.

\section{\pfc Framework}
In this section we present the \pfc framework of privately publishing data for classification analysis.

\subsection{Preliminaries}

\begin{figure*}[!htb]
\centering
\includegraphics[scale=0.8]{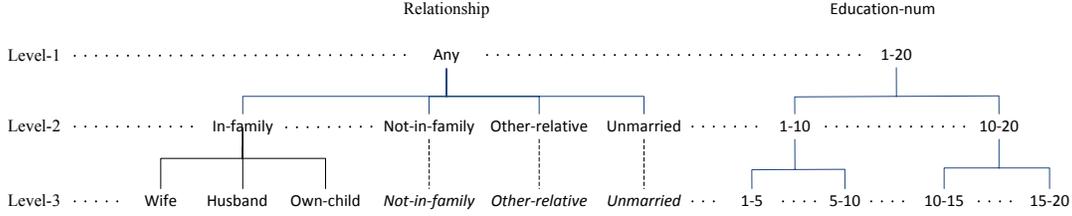}
\caption{Taxonomy hierarchies of Relationship attribute and Education-num attribute}\label{fig:taxo-hier}
\end{figure*}

We consider a dataset with a set of predictor variables and one binary response variable. The predictor variables can be numerical or categorical. 
Following~\cite{BA05, Iye02, FWY05, MCFY11}, for each predictor variable $A_i$, we assume the existence of a \emph{taxonomy hierarchy} (also called a \emph{generalization hierarchy} in the literature) $T_i$. 
Figure~\ref{fig:taxo-hier} shows the taxonomy hierarchies of \emph{Relationship}, a categorical variable, and \emph{Education-num}, a numerical variable. 
In the hierarchy, the root node represents the whole domain of the variable, and a parent node is a generalization (or a cover) of its children. Child nodes under the same parent node are semantically related; they are closer to each other than to nodes under a different parent node.

Each \emph{level} of a predictor variable's taxonomy hierarchy forms a partition of its domain. On the basis of the taxonomy hierarchy and its levels, we introduce the notion of a grid.

\begin{definition}[Grid]\label{def:grid}
Let $\mathsf{A} = \{A_1, \ldots, A_d\}$ be the set of predictor variables in a dataset and $\{T_{1},\ldots, T_{d}\}$ be their taxonomy hierarchies respectively.  Let $h_{i}$ be the height of $T_{i}$, $1\leq i \leq d$.  Then, a \emph{grid} $g$ is given by $\langle \ell_{1},\ldots, \ell_{d}\rangle$, where $1 \leq \ell_{i}\leq h_{i}$ and $1\leq i \leq d$.  A grid defines a partition of the data domain into \emph{cells} where each attribute $A_i$ is partitioned into the values at level $\ell_i$.
The number of cells of a grid is $\Pi_{i=1}^{d}|T_{i}[l_{i}]|$, where $|T_{i}[l_{i}]|$ is the number of nodes in the level $l_{i}$ of the hierarchy $T_i$.  And the number of all possible grids is $\Pi_{i=1}^{d} h_{i}$.
\end{definition}

\begin{definition}[Histogram]\label{def:histogram}
Given a dataset $D$ and a grid $g$, a histogram $\hist(D, g)$ partitions $D$ into cells according to $g$, and outputs the numbers of positive instances and negative instances in each cell.
\end{definition}
By injecting Laplace noise into the positive counts and negative counts of each cell in the histogram $\hist(D, g)$, we get the noisy version of it, $\nhist(D, g)$.

\subsection{Histogram Publishing for Classification}\label{sec:histogramPublishing}
Given a dataset $D$, the taxonomy hierarchies of its predictor variables, a total privacy budget $\epsilon$, and the number of tuples in the dataset $N$ (a rough estimate suffices), we generate a candidate pool of all grids whose number of cells are below a threshold, which is determined by $\epsilon$ and $N$.  We compute the quality score for each grid, which measures the usefulness for classification of each grid in the pool.  We then apply the exponential mechanism~\cite{MT07} to privately select a grid, and finally publish a noisy histogram using $g$. 

A key technical challenge is to come up with a low-sensitivity quality function that can measure the desirability of choosing a particular grid $g$.  We publish $\tilde{\mathsf{H}}(D,g)$, a noisy histogram of $D$ using $g$ to partition the data domain, and desire that classifiers learned from $\tilde{\mathsf{H}}(D,g)$ are close to classifiers learned from $D$.  Furthermore, we desire this to hold regardless of which particular classification algorithm is used.  We propose to define the quality function to minimize the misclassification error (when measured using the dataset $D$) for the classifier defined by the histogram $\tilde{\mathsf{H}}(D,g)$, i.e., for each cell in the grid defined by $g$, it predicts the majority class according to $\tilde{\mathsf{H}}(D,g)$.  This classifier is in the same spirit as histogram classifiers~\cite{DGL96, Nowak09}, and we use \nhc to denote it.



Suppose that a grid is able to separate positive and negative data points very well, then, even after adding the noises, this separation feature is still preserved and can be used to learn classifiers.  When no noise is added, the finest partition is desired.  However, with noise, we want to ensure that the noises do not overwhelm the true counts.
Since when a grid $g$ is fixed, the noisy histogram includes random noises, the misclassification error is a random variable, and we use the expected value of this error as the quality function.

\begin{definition}[Quality of grid]
Given a dataset $D$ and a grid $g$, the grid quality is measured by the expected misclassification error of the histogram classifier \nhc:
$$\mathrm{qual}_{D}(g) =  \mathbb{E}[err(\nhc, D)].$$
\end{definition}

The following Lemma shows how to compute $\mathrm{qual}_{D}(g)$.

\begin{lemma}[Quality of grid] \label{lemma:computeQual}
Given a dataset $D$ and a grid $g$, $\epsilon$ for the parameter of adding Laplacian noise to the counts, we have
\begin{align}
\mathrm{qual}_{D}(g) &= \sum_{c\in g} \left [\min(n_{c}^{+}, n_{c}^{-}) \left( 1 - \frac{e^{-\epsilon x_c}}{2}\left( 1 + \frac{\epsilon x_c}{2}\right) \right) \right. \nonumber \\  &+ \left. \max(n_{c}^{+}, n_{c}^{-}) \left( \frac{e^{-\epsilon x_c}}{2}\left( 1 + \frac{\epsilon x_c}{2}\right) \right)\right],\label{eqn:qualHWithNoise}
\end{align}
where $c$ ranges over all cells in the grid, $n_{c}^{+}$ is the number of positive data points in $c$, $n_{c}^{-}$ is the number of negative data points in $c$, and $x_c = |n_{c}^{+} - n_{c}^{-}|$.
\end{lemma}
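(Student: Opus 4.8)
The plan is to analyze the misclassification error cell by cell and then combine the pieces by linearity of expectation. Fix a cell $c$ of the grid $g$ and let $n_c^+, n_c^-$ be its true positive and negative counts. In $\nhist(D,g)$ the cell $c$ receives the noisy counts $\tilde n_c^+ = n_c^+ + Y_1$ and $\tilde n_c^- = n_c^- + Y_2$, where $Y_1, Y_2$ are independent $\Lap{1/\epsilon}$ variables. The classifier \nhc labels every point falling into $c$ with whichever class has the larger noisy count; since the noise is continuous, ties occur with probability zero, so the tie-breaking rule is irrelevant. Consequently, when evaluated against $D$, the classifier misclassifies exactly the $n_c^-$ true-negative points of $c$ if it predicts ``positive'' there, and the $n_c^+$ true-positive points if it predicts ``negative''.

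By the symmetry of the Laplace distribution we may assume without loss of generality that $n_c^+ \ge n_c^-$ (otherwise interchange the two classes), so that $\min(n_c^+,n_c^-)=n_c^-$, $\max(n_c^+,n_c^-)=n_c^+$, and $x_c = n_c^+-n_c^-$. Under this assumption \nhc makes the wrong prediction in $c$ precisely when $\tilde n_c^- \ge \tilde n_c^+$, i.e.\ when $Y_2 - Y_1 \ge x_c$. Writing $p_c := \Pr{Y_2 - Y_1 \ge x_c}$, the expected number of misclassified points contributed by cell $c$ is therefore $\max(n_c^+,n_c^-)\,p_c + \min(n_c^+,n_c^-)\,(1-p_c)$, which already has the shape of the summand in \eqref{eqn:qualHWithNoise}; it only remains to identify $p_c$.

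To evaluate $p_c$ I would condition on $Y_1$: $p_c = \int_{-\infty}^{\infty} f(y)\,\Pr{Y_2 \ge x_c + y}\,dy$, where $f$ is the $\Lap{1/\epsilon}$ density and the Laplace tail is $\Pr{Y_2 \ge s} = \tfrac12 e^{-\epsilon s}$ for $s \ge 0$ and $1 - \tfrac12 e^{\epsilon s}$ for $s < 0$. Splitting the integral at $y = -x_c$ (and the density at $y=0$) turns it into a handful of elementary exponential integrals that combine to $p_c = \tfrac12 e^{-\epsilon x_c}\bigl(1 + \tfrac{\epsilon x_c}{2}\bigr)$; equivalently, one may recall that the difference of two independent $\Lap{1/\epsilon}$ variables has density $\tfrac{\epsilon}{4}(1+\epsilon|z|)e^{-\epsilon|z|}$ and integrate it over $[x_c,\infty)$. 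Substituting this value of $p_c$ into the per-cell expression and summing over all cells $c \in g$, linearity of expectation gives $\mathrm{qual}_D(g) = \mathbb{E}[err(\nhc,D)] = \sum_{c\in g}\mathbb{E}[err_c]$, which is exactly \eqref{eqn:qualHWithNoise}.

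Everything other than the computation of $p_c$ is bookkeeping; that tail computation is the only non-routine step. The mild subtlety there is keeping the sign case analysis straight (for $x_c+y$ and for $y$), and it is worth double-checking that the linear factor $1+\epsilon x_c/2$ — the term that distinguishes the difference of two Laplace noises from a single Laplace tail, and that encodes the extra smoothing from having independent noise on both the positive and the negative count — comes out with the correct constant.
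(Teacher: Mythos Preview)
Your proposal is correct and follows essentially the same route as the paper: decompose the expected error cell by cell, express the per-cell contribution as $\min\cdot(1-p)+\max\cdot p$ where $p$ is the probability that the noise flips the majority label, and identify $p$ from the distribution of the difference of two independent $\Lap{1/\epsilon}$ variables. The only cosmetic difference is that the paper defines $p_c$ as the probability the majority label does \emph{not} flip (so their $p_c$ is your $1-p_c$) and cites a reference for the CDF $F_Y(y)=1-\tfrac{e^{-\epsilon y}}{2}\bigl(1+\tfrac{\epsilon y}{2}\bigr)$ rather than deriving it by conditioning as you outline.
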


To prove Lemma~\ref{lemma:computeQual}, we note that $\mathrm{qual}_{D}(g)$ can be further decomposed into the sum of expected misclassification error at each perturbed cell of the histogram after majority voting, and thus
\begin{align*}
\mathrm{qual}_{D}(g) 
                    &= \sum_{c\in g} \mathbb{E}[err(\tilde{c}, D)]
\end{align*}
Where $\tilde{c}$ denotes the application of the histogram classifier \nhc to the cell $c$.

For cell $c$, if the added Laplace noises do not change the majority class label, then the number of misclassified input tuples is $\min(n_{c}^{+}, n_{c}^{-})$; otherwise, it is $\max(n_{c}^{+}, n_{c}^{-})$.  Thus,
\begin{align}\label{eqn:noisyRiskForCell}
\mathbb{E}[err(\tilde{c}, D)] = \min(n_{c}^{+}, n_{c}^{-})\cdot p_c + \max(n_{c}^{+}, n_{c}^{-})\cdot (1 - p_c),
\end{align}
where $p_c$ is the probability that the majority class label in $c$ does not change after injecting Laplace noises.


Let $Z_1$ and $Z_2$ be the Laplace noises added to the majority class and the minority class of cell $c$, respectively, then
\begin{equation}\label{eqn:z2Minusz1}
p_c = \Pr{Z_{2} - Z_{1} \leq |n_{c}^{+} - n_{c}^{-}| }.
\end{equation}

\begin{lemma}[\cite{kotz2001laplace}]\label{lem:z1Minusz2}
Let $Z_1$ and $Z_2$ be two i.i.d. random variables that follow the Laplace distribution with mean 0 and scale $\frac{1}{\epsilon}$. Then the density of their difference $Y = Z_1 - Z_2$ is
\begin{align*}
f_{Y}(y) = \frac{\epsilon}{4} e^{-\epsilon |y|}(1 + \epsilon|y|)\quad\quad   -\infty < y < \infty,
\end{align*}
and the corresponding cumulative distribution function is
\begin{align}\label{eqn:z1Minusz2CDF}
	F_{Y}(y) =
	\begin{dcases}
		1 - \frac{e^{-\epsilon y}}{2}\left( 1 + \frac{\epsilon y}{2}\right),  &  \mathrm{if~} y \geq 0,\\
		\frac{e^{\epsilon y}}{2}\left( 1 - \frac{\epsilon y}{2}\right), & \mathrm{otherwise}.
	\end{dcases}
\end{align}
\end{lemma}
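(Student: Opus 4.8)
The plan is to reduce the statement to a single convolution computation. Since the Laplace distribution with mean $0$ is symmetric, $-Z_2$ has the same distribution as $Z_2$, so $Y = Z_1 - Z_2$ is distributed as a sum of two i.i.d.\ $\Lap{1/\epsilon}$ variables, and hence its density is the convolution of the two Laplace densities:
\[
f_Y(y) \;=\; \frac{\epsilon^2}{4}\int_{-\infty}^{\infty} e^{-\epsilon|t|}\, e^{-\epsilon|y-t|}\, dt .
\]
Symmetry of $Y$ about $0$ also means it suffices to evaluate this for $y \ge 0$ and then reflect to get the general case.

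For $y \ge 0$ I would split the real line into the three intervals $t < 0$, $0 \le t \le y$, and $t > y$, on which the pair $(|t|,|y-t|)$ equals $(-t,\,y-t)$, $(t,\,y-t)$, and $(t,\,t-y)$ respectively. The resulting integrals are elementary: the first contributes $\tfrac{1}{2\epsilon}e^{-\epsilon y}$, the second $y\,e^{-\epsilon y}$, and the third $\tfrac{1}{2\epsilon}e^{-\epsilon y}$, so the total integral is $e^{-\epsilon y}\!\left(\tfrac1\epsilon + y\right)$. Multiplying by $\epsilon^2/4$ gives $f_Y(y) = \tfrac{\epsilon}{4}e^{-\epsilon y}(1+\epsilon y)$, and reflecting yields $f_Y(y) = \tfrac{\epsilon}{4}e^{-\epsilon|y|}(1+\epsilon|y|)$ for all $y$, as claimed.

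For the cumulative distribution function, for $y \ge 0$ I would write $F_Y(y) = \tfrac12 + \int_0^y f_Y(t)\,dt$, using $F_Y(0) = \tfrac12$ by symmetry, and evaluate $\int_0^y e^{-\epsilon t}(1+\epsilon t)\,dt$ by a single integration by parts; this produces $F_Y(y) = 1 - \tfrac{e^{-\epsilon y}}{2}\!\left(1 + \tfrac{\epsilon y}{2}\right)$. For $y < 0$, symmetry of $Y$ gives $F_Y(y) = 1 - F_Y(-y)$, and substituting $-y > 0$ into the formula just obtained gives $\tfrac{e^{\epsilon y}}{2}\!\left(1 - \tfrac{\epsilon y}{2}\right)$, matching the claimed piecewise expression in~\eqref{eqn:z1Minusz2CDF}.

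I do not expect a genuinely hard step here — every integral is routine — so the only thing to be careful about is the case bookkeeping in the convolution (assigning the correct signs to the two absolute values on each subinterval) and consistently exploiting the symmetry of $Y$ so as not to redo the computation for negative arguments. An essentially equivalent alternative route is via moment generating functions: $\Lap{1/\epsilon}$ has MGF $(1 - s^2/\epsilon^2)^{-1}$, so $Y$ has MGF $(1 - s^2/\epsilon^2)^{-2}$, which one can invert to recover $f_Y$; but the direct convolution is the most transparent and self-contained, so that is what I would present.
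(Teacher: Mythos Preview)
Your proof is correct. The paper itself does not prove this lemma; it simply cites it from \cite{kotz2001laplace}. (The LaTeX source does contain a commented-out derivation, which goes in the opposite order: it computes $F_Y$ first by direct double integration of the joint density of $(Z_1,Z_2)$ over the region $\{z_1 - z_2 \le y\}$, splitting into subregions according to the signs of $z_1$ and $z_2$, and then differentiates to obtain $f_Y$.) Your route---density first via convolution, then integrate for the CDF, using the symmetry of $Y$ to handle $y<0$ without a separate computation---is the cleaner of the two, since the commented-out derivation repeats the case analysis for both signs of $y$ rather than invoking $F_Y(y)=1-F_Y(-y)$.
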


From Equations (\ref{eqn:z2Minusz1}) and (\ref{eqn:z1Minusz2CDF}), we have
\begin{equation}\label{eqn:pcm}
p_c = 1 - \frac{e^{-\epsilon |n_{c}^{+} - n_{c}^{-}|}}{2}\left( 1 + \frac{\epsilon |n_{c}^{+} - n_{c}^{-}|}{2}\right).
\end{equation}
Plugging Equation (\ref{eqn:pcm}) into Equation (\ref{eqn:noisyRiskForCell}) provides Lemma~\ref{lemma:computeQual}.

The lemma below bounds the sensitivity of our quality function.

\begin{lemma}\label{lemma:gridQualSen}
For any $\epsilon > 0$, the global sensitivity of the quality function~\ref{eqn:qualHWithNoise} is $B(\epsilon)$, where
        \begin{align*}
            B(\epsilon) &= x \cdot\left(\frac{e^{-\epsilon(x-1)}}{2}\left(1+\frac{\epsilon(x-1)}{2}\right) - \frac{e^{-\epsilon x}}{2}\left(1+\frac{\epsilon x}{2}\right)\right)\\
                 &+ \left(1 - \frac{e^{-\epsilon(x-1)}}{2}\left(1+\frac{\epsilon (x-1)}{2}\right)\right),
        \end{align*}
and
        \begin{align*}
            x = \frac{\epsilon e^{\epsilon} + \sqrt{2 - \left(4 - \epsilon^{2}\right)e^{\epsilon} + 2e^{2\epsilon} }}{-\epsilon + \epsilon e^{\epsilon}}.
        \end{align*}
\end{lemma}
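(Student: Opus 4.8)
The plan is to convert the global-sensitivity computation into a one-variable optimization. First I would exploit that the quality function of Equation~(\ref{eqn:qualHWithNoise}) is a sum over the cells of the (fixed) grid $g$, so that a neighboring dataset $D'$ obtained by adding or removing one record changes exactly one cell, and within that cell changes exactly one of $n_c^{+},n_c^{-}$ by $1$. Hence $\mathsf{GS}_q$ equals the maximum, over all non-negative integers $a,b$, of the change of a single cell's contribution when $a$ or $b$ is shifted by $1$. Writing $\max(a,b)=\min(a,b)+|a-b|$, the cell term in Equation~(\ref{eqn:qualHWithNoise}) collapses to $\phi(a,b)=\min(a,b)+\psi(|a-b|)$, where $\psi(t)=t\cdot\frac{e^{-\epsilon t}}{2}(1+\frac{\epsilon t}{2})$; so the task is to bound $|\phi(a\pm1,b)-\phi(a,b)|$.

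Second, I would carry out the short case analysis on $\operatorname{sign}(a-b)$ and on which count moves. Up to the symmetry $a\leftrightarrow b$: shifting the \emph{majority} count leaves $\min(a,b)$ fixed and moves the gap $x=|a-b|$ by $\pm1$, giving $|\psi(x+1)-\psi(x)|$ ($x\ge0$) and $|\psi(x-1)-\psi(x)|$ ($x\ge1$); shifting the \emph{minority} count moves $\min(a,b)$ by $1$ and the gap by $\mp1$, giving (up to sign and possibly after the reindexing $x\mapsto x+1$) the term $h(x):=1+\psi(x-1)-\psi(x)$, $x\ge1$; the boundary case $a=b$ is the instance $x=0$ above. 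I would then record the elementary facts $\psi\ge0$, $\psi(0)=0$, and $\psi'(t)=\frac12 e^{-\epsilon t}(1-\frac{\epsilon^2 t^2}{2})\le\frac12$, which give $\psi(x+1)-\psi(x)\le\frac12$ and $\psi(x)-\psi(x-1)\le\frac12$, while the two opposite differences satisfy $\psi(x)-\psi(x+1)<h(x+1)$ and $\psi(x-1)-\psi(x)<h(x)$ for free. Since $h(x)\to1$ as $x\to\infty$ and $h(x)>1$ once $\epsilon(x-1)>\sqrt2$, we have $\max_{x\ge1}h(x)>1>\frac12$, so every case is dominated by $\max_{x\ge1}h(x)$; relaxing $x$ from integers to reals only increases this, and this relaxed maximum is the quantity $B(\epsilon)$ in the statement.

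Third, I would maximize $h(x)=1+\psi(x-1)-\psi(x)$ over real $x\ge1$. Setting $h'(x)=\psi'(x-1)-\psi'(x)=0$ and substituting the closed form of $\psi'$ yields $e^{\epsilon}(1-\frac{\epsilon^2(x-1)^2}{2})=1-\frac{\epsilon^2 x^2}{2}$, a quadratic in $x$; collecting terms and factoring $\epsilon$ out of the discriminant $4\epsilon^2(2-(4-\epsilon^2)e^{\epsilon}+2e^{2\epsilon})$, its larger root simplifies to the $x$ in the lemma. Substituting that $x$ into $h(x)=1-g(x-1)+x(g(x-1)-g(x))$, with $g(t)=\frac{e^{-\epsilon t}}{2}(1+\frac{\epsilon t}{2})$, gives exactly the displayed $B(\epsilon)$. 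I would close the argument by checking that this critical point lies in $(1,\infty)$ and is the maximum: $h'(1)=\psi'(0)-\psi'(1)=\frac12-\frac12 e^{-\epsilon}(1-\frac{\epsilon^2}{2})>0$ so $h$ increases at the left endpoint, while $h'(x)<0$ for large $x$ and $h(\infty)=1<B(\epsilon)$, so the interior root dominates both boundary values $h(1)=1-\psi(1)$ and $\lim_{x\to\infty}h(x)=1$.

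The step I expect to be the main obstacle is not the calculus but the bookkeeping of the second paragraph: verifying that \emph{every} unit change of \emph{either} count, under \emph{every} sign of $a-b$ — including the boundary $a=b$ and the transitions where the minority count reaches or overtakes the majority — reduces to one of the three canonical one-variable expressions, and that the two ``$\psi$-difference'' families are genuinely dominated by $\max_x h(x)$ rather than occasionally exceeding it (this is where the cheap bound $\psi'\le\frac12$ and the trivial inequalities involving $h$ do the work). The subsequent quadratic solve is routine but tedious, requiring care to pick the correct root and to massage the discriminant into the stated form.
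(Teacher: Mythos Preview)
Your proposal is correct and follows essentially the same route as the paper: reduce to a single cell, split into the cases according to whether the changed count is the majority or the minority class (the paper frames this as $n_c^{+}<n_c^{-}$ versus $n_c^{+}\ge n_c^{-}$), show the minority-shift case $h(x)=1+\psi(x-1)-\psi(x)$ dominates, and then solve the quadratic $\psi'(x-1)=\psi'(x)$ to obtain the stated $x^{*}$ and $B(\epsilon)$. Your $\psi$-reformulation and the explicit bound $\psi'\le\tfrac12$ make the domination of the majority-shift cases cleaner than the paper's treatment, but the argument is otherwise the same.
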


\begin{algorithm}
	\caption{\pfc: Privately Publishing Data for Classification}\label{alg:all-in-one}
	\textbf{Input: } dataset $D$, the set of predictor variables $\mathcal{F}$ and their taxonomy hierarchies, total privacy budget $\epsilon$, maximum grid pool size $B$, median of the first branching factors of hierarchies $b$.\\
	
	\begin{algorithmic}[1]

        \Function{$\mathsf{main}$}{$D, \mathcal{F}, N, \epsilon$, $B$, $b$}
            \State $T \leftarrow \delta\cdot N\cdot\epsilon/2$ \label{line:T}
			\State $\hist \leftarrow \mbox{Enumerate}(\mathcal{F}, T)$\label{line:basicAlgoEnumeration}
            \If {$|\hist| \geq B$}
                \State $k\leftarrow \left\lceil \frac{2\log{T}}{\log{b}} \right\rceil$
            	\State $\epsilon_{fs} \leftarrow 0.3\epsilon$, $\epsilon_{sh} \leftarrow 0.3\epsilon$, $\epsilon_{ph} \leftarrow 0.4\epsilon$
	        	\State $X \leftarrow \FS(D, \mathcal{F}, k, \epsilon_{fs})$ \label{line:featureSelection}
    			\State $\hist_{X} \leftarrow \mbox{Enumerate}(X, T)$\label{line:basicAlgoEnumeration}
			    \State $\hat{I} \leftarrow \mbox{PrivateHistogramPublishing}(D, \hist_{X}, \epsilon_{sh}, \epsilon_{ph})$ \label{line:allPSPT}
            \Else
                \State $\epsilon_{sh} \leftarrow \frac{3}{7}\epsilon$, $\epsilon_{ph} \leftarrow \frac{4}{7}\epsilon$
			    \State $\hat{I} \leftarrow \mbox{PrivateHistogramPublishing}(D, \hist, \epsilon_{sh}, \epsilon_{ph})$ \label{line:allPSPT}
            \EndIf
            \State\Return $\hat{I}$
        \EndFunction

        \Statex

        \Function{$\mathsf{PrivateHistogramPublishing}$}{$D, \hist, \epsilon_{sh}, \epsilon_{ph}$}\label{func:histpub}
			\State $h \leftarrow \HS(D, \hist, \epsilon_{sh})$
			\State $\hat{I} \leftarrow \HP(D, h, \epsilon_{ph})$
            \State\Return $\hat{I}$
        \EndFunction

        \Statex
        \Function{\HS}{$D, \hist, \epsilon_{sh}$}
			\For {$i = 1 \to |\hist|$}
				\State $q_i \leftarrow \mathrm{qual}(\hist_i) $ 
				\State $p_i \leftarrow \myexp{-(q_i\epsilon_{sh}) / 2}$
			\EndFor
			\State $h \leftarrow$ sample $i \in [1..|\hist|]$ according to $p_i$
            \State\Return $h$
        \EndFunction

        \Statex
        \Function{\HP}{$D, h, \epsilon_{ph}$}
            \State Initialize $I$ to empty
			\For {each cell $c \in h$}
				\State $\hat{n}_{c}^{+} \leftarrow n_{c}^{+} + \mathrm{Lap}(1/\epsilon_{ph})$
				\State $\hat{n}_{c}^{-} \leftarrow n_{c}^{-} + \mathrm{Lap}(1/\epsilon_{ph})$
                \State Add $(\hat{n}_{c}^{+}, \hat{n}_{c}^{-})$ to $I$
			\EndFor
			\State Round all counts of $I$ to their nearest non-negative integers.
            \State\Return $I$
        \EndFunction
        \item[]

        \Statex
        \Function{\FS}{$D, \mathcal{F}, k, \epsilon_{fs}$}
            \State Initialize $X$ to empty
            \State Let $R$ be the response variable in $D$
            \For {{\bf each} $A_i \in \mathcal{F}$}
                \State $cor_i \leftarrow \cor{A_i, R, D}$
                \State $p_i \leftarrow e^{\frac{\epsilon_{fs}\cdot cor_i}{4k}}$
            \EndFor
            \For {$i = 1 \to k$}
		        \State $f \leftarrow $sample $A_i\in \mathcal{F}$ according to $p_i$
		        \State Add $f$ to $X$
		        \State Remove $f$ from $\mathcal{F}$
            \EndFor
            \State \Return $X$
        \EndFunction
	\end{algorithmic}
\end{algorithm}

\subsection{Correlation-based Feature Selection}\label{sec:featureSelection}

Our basic solution privately selects a grid from the candidate pool to release synthetic data.  As the number of predictor variables increases, the candidate pool size grows exponentially, giving arise to a scalability issue.  Fortunately, usually in a real dataset some predictor variables are not useful for predicting the class labels.  Such irrelevant variables can thus be excluded from the classification to improve the scalability of our solution.

Feature selection~\cite{IE03} is the process of selecting a subset of important features (predictor variables) to build a classification model.
Various feature selection methods have been proposed including wrapper method, embedded methods, stepwise regression~\cite{Efroymson}. However, they require building a large number of classification models, one for a subset of features one wants to evaluate.  It is unclear how to adapt these methods to satisfy differential privacy.
We propose a simple but effective approach, which selects predictor variables based on a correlation analysis between predictor variables and the class (i.e., target variable).  We adapt the \chitest~\cite{HK11} to have a low sensitivity.


Given a dataset $D$ with $N$ tuples, the \chitest~\cite{HK11} evaluates whether categorical variables $A$ and $B$ are correlated. Suppose that variable $A$ has $m$ distinct values, $a_1, \ldots, a_m$, and $B$ has $n$ distinct values, $b_1, \ldots, b_n$. The $\chi^2$ correlation between $A$ and $B$ (a.k.a Pearson $\chi^2$ statistic) is defined as

\begin{equation}\label{eqn:chiSquare}
\chi^2(A, B) = \sum_{i=1}^{m} \sum_{j = 1}^{n} \frac{(o_{ij} - e_{ij})^2}{e_{ij}},
\end{equation}
where $o_{ij}$ is the observed number of tuples with $A = a_i$ and $B = b_j$ in dataset $D$, and the expected count $e_{ij}$ is computed by assuming $A$ and $B$ are independent
\begin{equation}\label{eqn:expectedFrequency}
e_{ij} = \frac{\mathrm{count}(A = a_i) \times \mathrm{count}(B = b_j)}{N},
\end{equation}
where $\mathrm{count}(A = a_i)$ returns the number of tuples in dataset $D$ with $A=a_i$.

Clearly, if $A$ and $B$ are independent, then $\chi^2(A, B) = 0$. 
The bigger the $\chi^2$ value is, the stronger the correlation of variables $A$ and $B$ is.  $\chi^2(A, B)$, however, has a large global sensitivity, because the $e_{ij}$'s in Equation~(\ref{eqn:chiSquare}) can be very small.  Our analysis (omitted for space limitation) shows that it is at least $\frac{N^2 + 1}{2N}$.

We adapt \chitest to define the correlation $\mathsf{Cor}(A, R)$ between a predictor variable $A$ with $m$ distinct values and the binary response variable $R$ as:
\begin{equation}\label{eq:feature-correlation-function}
\cor{A, R} = \sum_{i=1}^{m} \sum_{j = 1}^{2} \left |o_{ij} - e_{ij}\right |,
\end{equation}
where $o_{ij}$ is the observed number of tuples with $A = a_i$ and $R = r_j$ in $D$, and the expected count $e_{ij}$ is computed by assuming $A$ and $R$ are independent as in Equation~(\ref{eqn:expectedFrequency}).




\begin{lemma}\label{lem:corr-sen}
Let $A$ and $R$ be a categorical variable and binary response variable in a dataset $D$, respectively. Then, the global sensitivity of Function \cor{A, R, D} is 2.
\end{lemma}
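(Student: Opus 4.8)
The plan is to bound $|\cor{A,R,D}-\cor{A,R,D'}|$ over all neighbouring pairs $D\simeq D'$ (datasets differing in one tuple), and then confirm the bound is not loose by exhibiting a family of such pairs on which the difference approaches it.

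The first move is to use that $R$ is binary to collapse the inner sum. Write $o_{\cdot 1}=\sum_i o_{i1}$ for the number of positive tuples, $n_i=o_{i1}+o_{i2}$ for the number of tuples with $A=a_i$, and $\rho=o_{\cdot 1}/N$; note $o_{\cdot 1}+o_{\cdot 2}=N$. Since $e_{i1}+e_{i2}=n_i(o_{\cdot 1}+o_{\cdot 2})/N=n_i$, the two deviations in row $i$ are negatives of each other, so
\begin{equation*}
\cor{A,R,D}=2\sum_{i=1}^{m}\bigl|o_{i1}-e_{i1}\bigr|=2\sum_{i=1}^{m}\bigl|o_{i1}-\rho\,n_i\bigr|=2\,\|v\|_1,
\end{equation*}
where $v=(o_{i1}-\rho\,n_i)_{i=1}^{m}$ is a vector whose coordinates sum to $0$. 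Writing $v'$ for the analogous vector built from $D'$, the reverse triangle inequality gives $|\cor{A,R,D}-\cor{A,R,D'}|\le 2\,\|v-v'\|_1$, so it suffices to bound $\|v-v'\|_1$.

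The second step is a short case analysis on how one tuple can change. If the tuple keeps its response value and only moves between two values $a_i,a_j$ of $A$, then $\rho$ is unchanged and only the $i$-th and $j$-th coordinates of $v$ move, each by $|1-\rho|\le 1$. If the change flips a tuple's response (say positive to negative), then $\rho$ drops by $1/N$, which re-weights \emph{every} coordinate of $v$ by $n_i/N$; but this re-weighting has total mass $\sum_i n_i/N=1$, and combining it with the two coordinates that also move directly, together with the constraint $\sum_i v_i=0$, still bounds $\|v-v'\|_1$. Add/remove of a single tuple reduces to this after writing a replacement as one deletion followed by one insertion, so one fixes that convention at the outset.

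The step I expect to be the crux is precisely this second one. Because the empirical rate $\rho$ multiplies every $n_i$ and $N$ divides the whole expression, a one-tuple change perturbs all $m$ summands of $\cor{A,R,D}$ simultaneously, so a naive term-by-term triangle bound grows like $\sum_i n_i=N$ and is worthless. The estimate closes only because the re-weighting is a convex combination (total mass one) and because $v$ has zero coordinate-sum; turning these two facts into a clean constant — and making sure the neighbouring relation is the one for which that constant is as stated — is where the real bookkeeping lies. For the matching lower bound one takes a nearly homogeneous dataset, e.g.\ one with a single minority tuple whose removal collapses $\cor{A,R,D}$ to $0$, and checks that the difference tends to the claimed value as $N\to\infty$.
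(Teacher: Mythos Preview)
Your rewriting $\cor{A,R,D}=2\|v\|_1$ with $v_i=o_{i1}-\rho\,n_i$ is correct and is the right first move. The difficulty is the constant you will actually get at the step you flag as the crux. Carrying out the computation for the ``add one positive tuple with $A=a_1$'' case (the case the paper treats) gives
\[
\|v-v'\|_1 \;=\; \frac{2\,\mathbf{c}_2\,(N-n_1)}{N(N+1)},
\]
where $\mathbf{c}_2$ is the number of negative tuples and $n_1$ the number of tuples with $A=a_1$ in $D$. This is bounded above by $2N/(N+1)<2$, not by $1$, so your reverse-triangle bound $2\|v-v'\|_1$ yields sensitivity at most $4$, not $2$.

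This is not slack in the triangle inequality. Let $D$ consist of $N$ tuples, all with $A=a_2$ and $R=r_2$; then every $o_{ij}$ equals its $e_{ij}$ and $\cor{A,R,D}=0$. Adding a single tuple $(a_1,r_1)$ produces four deviations, each equal to $N/(N+1)$, so $\cor{A,R,D'}=4N/(N+1)$. Hence the global sensitivity is at least $4N/(N+1)\to 4$, and the lemma as stated is false.

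The paper's own argument reaches exactly the same intermediate quantity $\frac{2\mathbf{c}_2(N-n_1)}{N(N+1)}$ and bounds it by $2$; the error is in its first displayed equality, which silently drops both the absolute values inside $\mathrm{Cor}$ and the factor $2$ coming from the inner sum over $j\in\{1,2\}$ (recall $|o_{i1}-e_{i1}|=|o_{i2}-e_{i2}|$, so that inner sum doubles everything). With those restored, the paper's computation also gives $4$. In short, your plan is sound and the crux you flagged is real; what fails is not a missing idea on your side but the target constant --- the correct global sensitivity of $\cor{A,R,\cdot}$ as defined in the paper is $4$, not $2$.
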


The proof is in the Appendix Section.

\subsection{The Algorithm}\label{sec:wholeFramework}

We now present the full algorithm (Algorithm~\ref{alg:all-in-one}) for our framework of releasing private data for classification tasks.

Line \ref{line:T} sets the threshold of the maximum number of cells in a grid, to prevent the average counts from being dominated by the injected noises.  That is,
\begin{equation}\label{eqn:nonDominatingRuleTmp}
E\left[\left|\Lap{\frac{1}{\epsilon}}\right|\right]\leq \frac{1}{5}\cdot\frac{N}{T},
\end{equation}
which means that the average noise magnitude is no more than the 20\% of the average cell count.

When feature selection is deemed necessary, we allocate 30\% of the privacy budget to privately select $k$ predictor variables that are strongly correlated with the response variable.  These selected variables are then used to release private synthetic data.

The number of attributes to be selected, $k$, is based on $T$, the maximum grid size.  We want to have enough attributes so that relevant attributes are included.  At the same time, we do not want $k$ so large so that there are too many candidate grids with size below $T$.  Let $b$ be the median of the first branching factors of hierarchies of all attributes. We set $k$ to be $\frac{2\log(T)}{\log(b)}$.

\begin{theorem}\label{thm:extendedAlgoDP}
Algorithm~\ref{alg:all-in-one} satisfies $\epsilon$-differential privacy.
\end{theorem}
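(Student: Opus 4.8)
The plan is to prove Theorem~\ref{thm:extendedAlgoDP} by decomposing Algorithm~\ref{alg:all-in-one} into the primitives that actually read the sensitive dataset $D$, showing each is differentially private with the budget assigned to it, and then invoking sequential composition together with closure under post-processing. The first thing I would establish is that every quantity governing the \emph{control flow} --- the threshold $T=\delta N\epsilon/2$, the candidate pool $\hist$ returned by $\mbox{Enumerate}$, the test $|\hist|\ge B$, and $k=\lceil 2\log T/\log b\rceil$ --- depends only on the public inputs ($N$ as a public rough estimate, $\epsilon$, $B$, $b$, $\mathcal{F}$, and the taxonomy hierarchies), not on the contents of $D$. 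Hence the branch taken leaks nothing, and it suffices to show that, conditioned on either branch, the remaining computation is $\epsilon$-DP.

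Next I would verify the three data-dependent primitives. First, $\HP$ outputs $n_c^{+}+\Lap{1/\epsilon_{ph}}$ and $n_c^{-}+\Lap{1/\epsilon_{ph}}$ for every cell $c$ of the chosen grid; adding or removing one tuple of $D$ moves that tuple into or out of exactly one cell and changes exactly one of its two counts by $1$, so the $L_1$ sensitivity of the entire count vector is $1$, and the Laplace mechanism makes $\HP$ satisfy $\epsilon_{ph}$-DP, with the subsequent rounding to nearest nonnegative integers being post-processing. Second, $\HS$ selects grid $i$ with probability proportional to $e^{-q_i\epsilon_{sh}/2}$ where $q_i=\mathrm{qual}(\hist_i)$; this is the exponential mechanism applied to the quality function of Lemma~\ref{lemma:computeQual} (to be minimised, hence the negative sign), whose global sensitivity is $B(\epsilon_{ph})\le 1$ by Lemma~\ref{lemma:gridQualSen}, so the coefficient $\epsilon_{sh}/2$ yields $\epsilon_{sh}$-DP. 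Third, $\FS$ draws $k$ predictor variables without replacement with weights $p_i=e^{\epsilon_{fs}\,cor_i/(4k)}$; since $\cor{A,R}$ has global sensitivity $2$ by Lemma~\ref{lem:corr-sen}, each of the $k$ draws is an exponential mechanism on the remaining candidates run with parameter $\epsilon_{fs}/k$, and composing the $k$ draws gives that $\FS$ is $\epsilon_{fs}$-DP.

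Finally I would assemble the two branches. When $|\hist|\ge B$, the steps touching $D$ are, in order, $\FS$ with $\epsilon_{fs}=0.3\epsilon$, then $\HS$ with $\epsilon_{sh}=0.3\epsilon$, then $\HP$ with $\epsilon_{ph}=0.4\epsilon$ (the intervening $\mbox{Enumerate}(X,T)$ is a function of the released $X$ and does not read $D$), so by sequential composition the composite is $(0.3+0.3+0.4)\epsilon=\epsilon$-DP. When $|\hist|<B$, the steps are $\HS$ with $\epsilon_{sh}=\tfrac37\epsilon$ and $\HP$ with $\epsilon_{ph}=\tfrac47\epsilon$, composing to $(\tfrac37+\tfrac47)\epsilon=\epsilon$-DP. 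Combined with the data-independence of the branch choice, Algorithm~\ref{alg:all-in-one} is $\epsilon$-DP.

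I expect the only genuinely subtle point to be the accounting for $\FS$: one must argue that iteratively sampling \emph{without replacement} using fixed exponential weights is equivalent to (adaptively) composing $k$ exponential-mechanism selections each with budget $\epsilon_{fs}/k$, and that the per-round sensitivity stays $2$ uniformly over the shrinking candidate set. A second point worth making explicit is that the sensitivity invoked in $\HS$ is Lemma~\ref{lemma:gridQualSen} evaluated at the perturbation budget $\epsilon_{ph}$, and that $B(\cdot)\le 1$, so the exponent coefficient $\epsilon_{sh}/2$ is at worst conservative; everything else is a routine application of the standard sequential-composition and post-processing theorems for differential privacy.
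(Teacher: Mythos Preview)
Your approach is exactly the paper's: invoke sequential composition over the data-dependent steps, after observing that the branching and the pool enumeration are data-independent. The paper's own proof is a single sentence to this effect; your write-up simply supplies the per-primitive accounting the paper omits, and the analyses of \HP and \FS are correct.

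One point to tighten: you assert that Lemma~\ref{lemma:gridQualSen} gives $B(\epsilon_{ph})\le 1$, but the lemma only states that the sensitivity of $\mathrm{qual}$ equals $B(\epsilon)$; it does not bound $B(\epsilon)$ by $1$. A direct evaluation of the formula in Lemma~\ref{lemma:gridQualSen} (for instance at $\epsilon=1$, where $x^*\approx 3.29$ and $B(1)\approx 1.09$) shows the bound can exceed $1$, in which case the exponent $\epsilon_{sh}/2$ in \HS would not be conservative and the step would deliver only $B(\epsilon_{ph})\cdot\epsilon_{sh}$-DP rather than $\epsilon_{sh}$-DP. This is not a defect in your decomposition---the paper's one-line proof glosses over the same point---but you should not attribute $B(\epsilon)\le 1$ to Lemma~\ref{lemma:gridQualSen}; either verify it independently or note that the exponent in \HS ought to be $\epsilon_{sh}/(2B(\epsilon_{ph}))$ for the stated budget to hold.
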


Theorem \ref{thm:extendedAlgoDP} shows that Algorithm \pfc satisfies \linebreak $\epsilon$-\difp. The proof of Theorem \ref{thm:extendedAlgoDP} is thus straightforward by considering the sequential composability of \difp as discussed in Section \ref{sec:related}.

\section{Experiment}\label{sec:expt}
\begin{table*}
\centering
\begin{tabular}{|c|c|c|c|c|c|}\hline
{\bf Dataset} & {\bf \# Dim} & {\bf \# Numerical} & {\bf \# Categorical} & {\bf \# Records} & {\bf Classification Task}\\ \hline
Adult & 15 & 6 & 8 & 45,222 & Determine whether a person makes over 50K a year.\\ \hline
Bank & 21 & 10 & 10 & 41,188 & Determine whether the client subscribed a term deposit.\\ \hline
US & 47 & 15 & 31 & 39,186 & Determine whether a person makes over 50K a year.\\ \hline
BR & 43 & 14 & 28 & 57,333 & Determine whether a person makes over 300 per month.\\ \hline
\end{tabular}
\caption{Dataset characteristics}\label{tab:dataset_desc}
\end{table*}

\begin{table*}
\centering
\begin{tabular}{|c|c|c|}\hline
\quad & {\bf Methods} & {\bf Description}\\ \hline
\multirow{8}{*}{Non-Interactive}
&\pfc & Our proposed method. \\ \cline{2-3}
&\pfcSelNF & Our proposed method with noise free feature selection and histogram selection. \\ \cline{2-3}
&\pfcFSNF & Our proposed method with noise free feature selection. \\ \cline{2-3}
&\dgen~\cite{MCFY11} & Private data release for classification via recursive partitioning. \\ \cline{2-3}
&\dgenNF & Noise free \dgen. \\ \cline{2-3}
&\pbayes~\cite{ZCP+14} & Private Data Release via Bayes network. \\ \cline{2-3}
&\pbayesNF & Noise free \pbayes. \\ \cline{2-3}
&\pph~\cite{Vinterbo12} & Private data release for classification by projection and perturbation.\\ \hline\hline
\multirow{3}{*}{Interactive}
&\dpc~\cite{FS10} & Privately construct C4.5 decision tree classifier. \\ \cline{2-3}
&\pgene~\cite{ZXY+13} & Private model fitting based on genetic algorithms.\\ \cline{2-3}
&\perm~\cite{CMS11} & Private classifier construction based on empirical risk minimization. \\ \hline
\end{tabular}
\caption{Summary of differentially private classification methods}\label{tab:methods_summary}
\end{table*}

\begin{figure*}[!htb]
\begin{tabular}{cc}

   \multicolumn{2}{c}{\hspace{-2mm}\includegraphics[width = \textwidth]{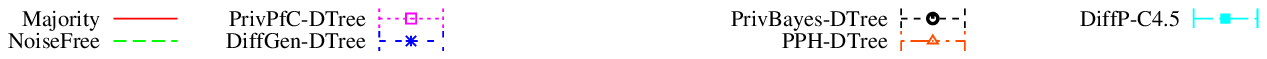}}\\
	\includegraphics[width = 3.0in]{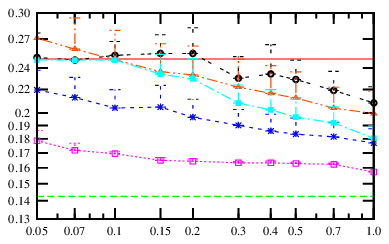}
	& \includegraphics[width = 3.0in]{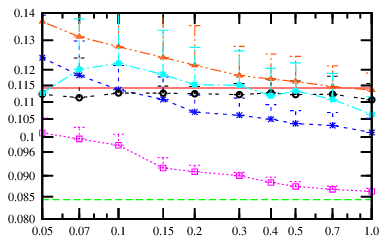}\\
	(a) Adult & (b) Bank\\
	\includegraphics[width = 3.0in]{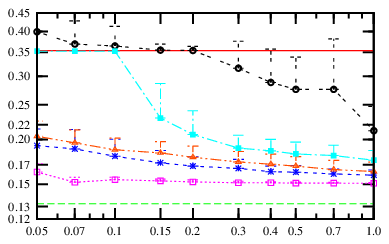}
	& \includegraphics[width = 3.0in]{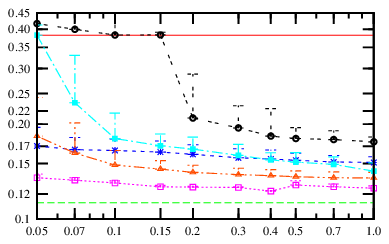} \\
	(c) US & (d) BR\\

\end{tabular}
	\caption{Comparison of \pfc, \dgen, \pbayes, \pph and \dpc by decision tree classification.  x-axis: privacy budget $\epsilon$ in log-scale. y-axis: misclassification rate in log-scale.}\label{fig:comparison-dtree}
\end{figure*}

\begin{figure*}[!htb]
\begin{tabular}{cc}


    \multicolumn{2}{c}{\hspace{-2mm}\includegraphics[width = \textwidth]{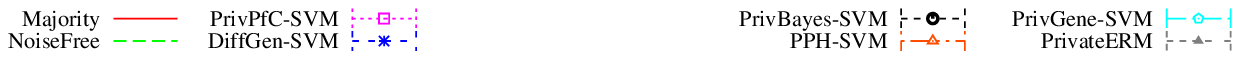}}\\
	\includegraphics[width = 3.0in]{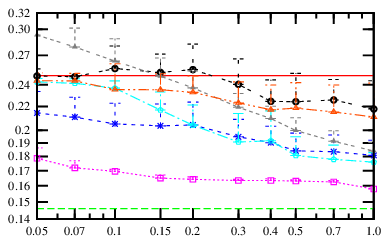}
	& \includegraphics[width = 3.0in]{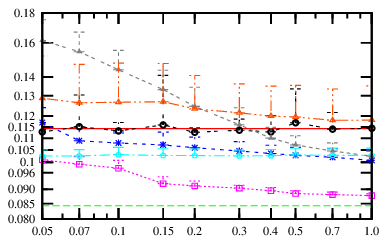}\\
	(a) Adult & (b) Bank\\
	\includegraphics[width = 3.0in]{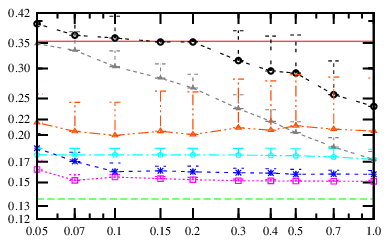}
	& \includegraphics[width = 3.0in]{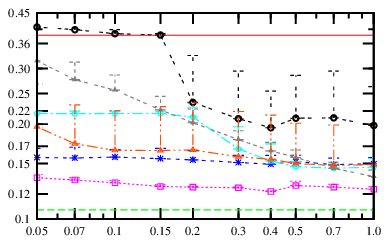} \\
	(c) US & (d) BR\\

\end{tabular}
	\caption{Comparison of \pfc, \dgen, \pbayes, \pph, \pgene and \perm by SVM classification. x-axis: privacy budget $\epsilon$ in log-scale. y-axis: misclassification rate in log-scale.}\label{fig:comparison-svm}
\end{figure*}

\subsection{Experimental Settings}
\mypara{Datasets.} We use 4 real datasets for our experiments. The first one is the Adult dataset from the UCI machine learning repository~\cite{AN10}.
It contains 6 numerical attributes and 8 categorical attributes, and is widely used for evaluating the performance of classification algorithms. After removing missing values, the dataset contains 45,222 tuples.
The second dataset is the Bank marketing dataset from the same repository. It contains 10 numerical attributes and 10 categorical attributes on 41,188 individuals. The third is the US dataset from the \emph{Integrated Public Use Microdata Series} (IPUMS)~\cite{ipums}. It has 39,186 the United States census records in 2010, with 15 numerical attributes and 31 categorical ones. The last is the BR dataset (also from IMPUS), which contains 57,333 Brazil census records in 2010 and has 14 numerical attributes and 28 categorical ones.
The classification tasks for the Adult, US and BR datasets are to predict whether an individual has an income above a certain threshold.  The one for the Bank dataset is to predict whether a client will subscribe a term deposit. Table~\ref{tab:dataset_desc} summarizes the characteristics of the datasets.

\mypara{Taxonomy Hierarchies.}
For the Adult dataset, we use the same taxonomy hierarchies as \dgen~\cite{MCFY11}.  For the remaining 3 datasets, we do the following.  For numerical attributes,
we partition each domain into equal size bins and build hierarchies over them.  For categorical attributes, we build taxonomy hierarchies by considering the semantic meanings of the attribute values.

\mypara{Competing Methods.} We compare \pfc with 6 state-of-the-art methods in terms of misclassification rate.  These include 3 non-interactive methods, \dgen~\cite{MCFY11}, \pbayes~\cite{ZCP+14} and Private Projected Histogram (\pph)~\cite{Vinterbo12},  which privately release synthetic datasets for classification analyses, and 3 interactive methods, \pgene~\cite{ZXY+13}, \dpc~\cite{FS10}, and \perm~\cite{CMS11}, which includes one method for decision tree, and two methods for SVM.

\myspara{\dgen.}~\cite{MCFY11} consists of two steps, partition and perturbation.  The partition step first generalizes all attribute's values into the topmost nodes in their taxonomy hierarchies and then iteratively selects one attribute at a time for specialization, using the exponential mechanism.  The quality of each candidate specialization is based on the same heuristics as used by the decision tree algorithms, such as information gain and majority class.  As suggested in~\cite{MCFY11}, we use the majority class to measure the candidate quality, and set the number of specialization steps to be 10 for the Adult dataset and the bank dataset.  For the US and BR datasets, we set the number to be 6 and 8 respectively, as beyond these numbers, the \dgen implementation runs into memory problems.  The perturbation step injects Laplace noise into each cell of the partition and outputs all the cells with their noisy counts as the noisy synopsis of the data.

\myspara{\pbayes.}~\cite{ZCP+14} determines the structure of a Bayesian network by first randomly select an attribute as the first node, and then iteratively select one attribute and up to $k$ nodes as the attribute's parent nodes, which have the maximum mutual information.  After the structure is determined, \pbayes perturbs the marginals needed for computing the conditional distributions.  The performance of the \pbayes algorithm depends on $k$.  We set $k=3$ for the Adult dataset and the Bank dataset, which is the same as the one used in~\cite{ZCP+14}.  For the US and BR datasets, which were not used in~\cite{ZCP+14}, setting $k=3$ runs out of memory in our experiments because of the larger dimensionality; we set $k=2$ for them.


\myspara{\pph.}~\cite{Vinterbo12} starts with a feature selection procedure to select a set of $k$ features that have the maximal discernibility.  Then, it uses the selected features to build a histogram.  For each categorical attribute, the full domain is used.  For numerical attribute, it uses the formula proposed in Lei~\cite{Lei11} to decide how many bins to discretize them.

\emph{\pgene}~\cite{ZXY+13} is a general-purpose private model fitting framework based on genetic algorithms, which can be applied to SVM classification.  \emph{\dpc}~\cite{FS10} is an interactive private algorithm for building a C4.5 decision tree classifier differential-privately.  \emph{\perm}~\cite{CMS11} is an interactive private algorithm for constructing SVM classifier by injecting noise into the risk function first and then optimizing the perturbed risk function.

The source codes of the \dgen, \pbayes, \pph, \dpc, \pgene were downloaded from \cite{DiffGenImpl}, \cite{PrivBayesImpl}, ~\cite{PPHImpl}, \cite{DiffP-4.5-Impl} and \cite{PrivGeneImpl}, respectively.  The source code of \perm was shared by the authors of \pbayes~\cite{ZCP+14}.

\mypara{Evaluation Methodology.} We consider two baselines --  \emph{Majority} and \emph{NoiseFree}. \emph{Majority} is the misclassification rate by majority voting on the class attribute, which predicts each test case with the majority class label in the train dataset.  \emph{NoiseFree} is the misclassification rate of a decision tree or SVM classifier built on the true data. We expect that a good algorithm to perform better than \emph{Majority}, and gets close to \emph{NoiseFree} as $\epsilon$ increases.

The evaluation is based on two classification models: the CART decision tree classifier and the SVM classifier with radial basis kernel.
Interactive approaches \dpc and \perm build private classifiers directly.  And we use parameters suggested by the corresponding papers.  The non-interactive approaches \pfc, \pph, \dgen, and \pbayes generate private synthetic datasets. To evaluate their performance in terms of decision tree model, we use the rpart~\cite{rpart} library to build decision trees on their generated synthetic datasets. For the evaluation in terms of SVM model, we use the LibSVM package~\cite{CC01a} to build SVM classifiers on the synthetic datasets.  We use the same set of parameters of rpart and LibSVM respectively in evaluating the above non-interactive approaches.


For all the experiments, we vary $\epsilon$ from 0.05 to 1.0.  Similar to the experiment settings of~\cite{FS10, MCFY11, Vinterbo12}, under each privacy budget, we execute 10-fold stratified cross-validation to evaluate the misclassification rate of the above methods.  For each train-test pair, we run the target method 10 times.  We report the average measurements over the 10 runs and the 10-fold crossvalidations.  We set the maximum grid pool size to be 200,000.
The implementation and experiments of \pfc were done in Python 2.7 and all experiments were conducted on an Intel Core i7-3770 3.40GHz PC with 16GB memory.


\subsection{Comparison against Competitors}

\mypara{Comparison on Decision Tree.} Five approaches are involved: \pfc, \dgen, \pbayes,  \pph and \dpc.
Figure~\ref{fig:comparison-dtree} reports their average misclassification rates 
and the corresponding standard deviations.
Clearly, \pfc has the best performance, followed by \dgen, \pph, \dpc.  \pbayes is the poorest in most cases.
The performance of \pfc is also the most robust, as can be seen from the fact that the standard deviation of its misclassification rates is the lowest.








\mypara{Comparison on SVM.} We compare 6 approaches: \pfc, \dgen, \pbayes, \pph, \pgene, and \perm. Figure~\ref{fig:comparison-svm} reports the experimental results. Once again, \pfc has the best performance, followed by \dgen, \pgene, \pph, \perm and \pbayes.

\mypara{Effectiveness of Private Feature Selection.}
In Figure \ref{fig:fs-comparison}, we evaluate our private feature selection method on the US dataset under privacy budget 0.1.  We create a variant of \pfc, called \pfcFSNF, in which the feature selection step of \pfc is noise-free and all the privacy budget is used in remaining steps.  PPH is included in the comparison since it also has a private feature selection step.  We create variants for each of the rest competitors, by adding our proposed feature selection method as preprocessing step which uses 30\% of the total privacy budget.  

From Figure \ref{fig:fs-comparison}, we can see that our \pfc algorithm has close performance to its counterpart (\pfcFSNF).  This justifies that fact that the set of attributes \pfc selects for grid partition is almost as good as those selected by \pfcFSNF and the effectiveness of \pfc mainly comes from the private histogram selection.  We can also see that although \pbayes, \dpc and \perm's performances are improved significantly by doing our private feature selection step, they are still outperformed by \pfc.

\begin{figure*}[!htb]
\begin{tabular}{cc}
	\includegraphics[width = 3.0in]{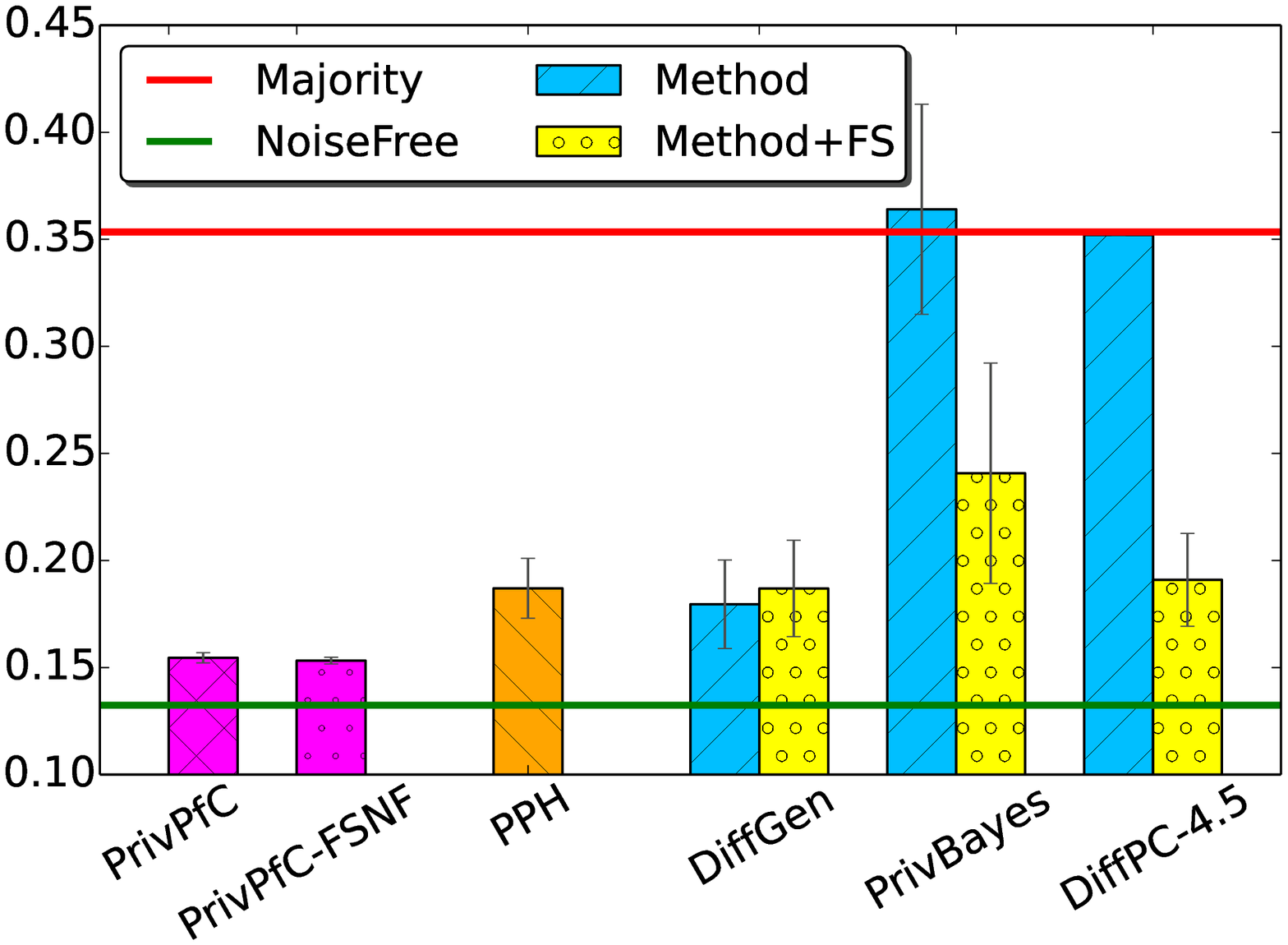}
	& \includegraphics[width = 3.0in]{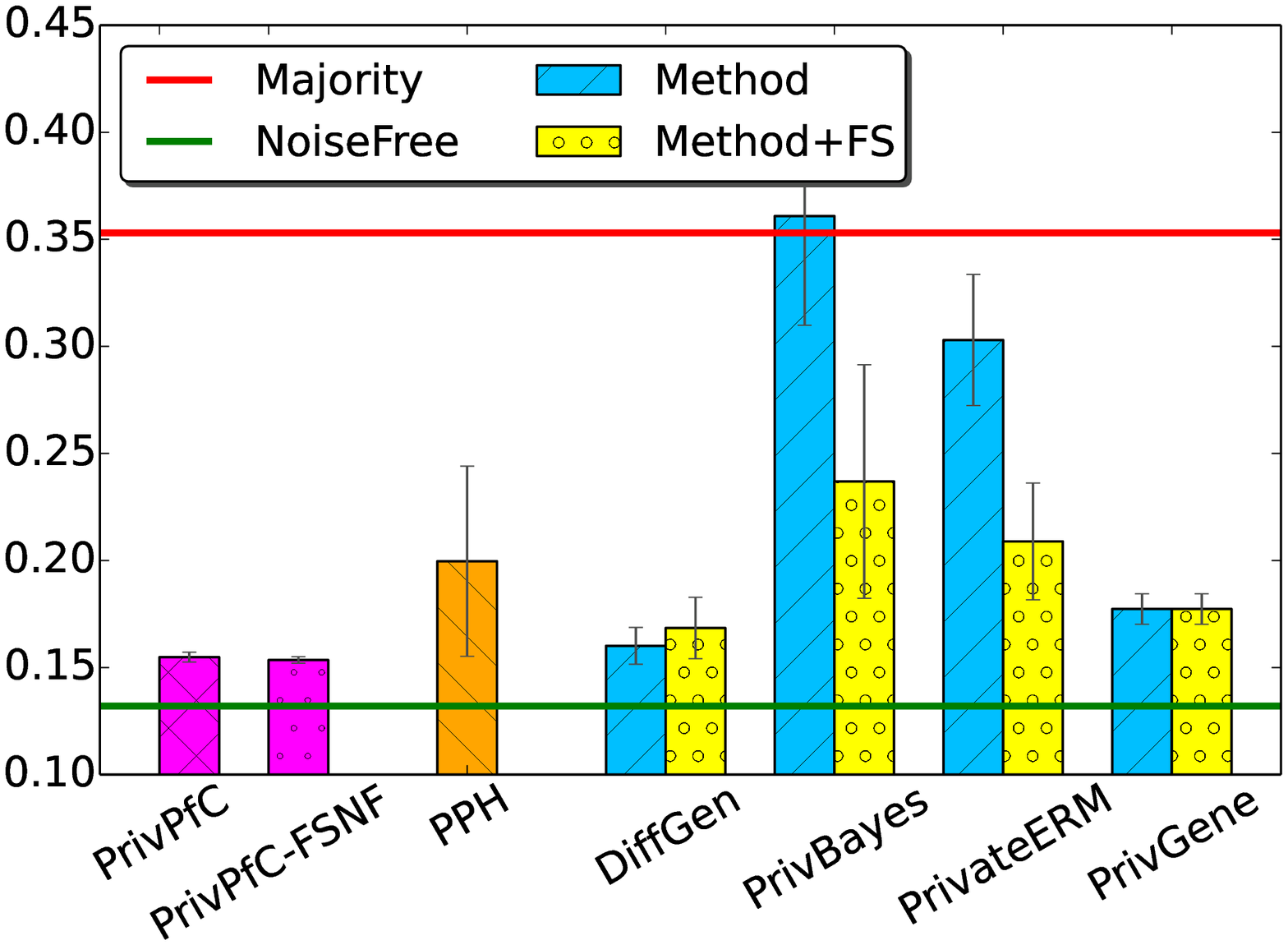}\\
	(a) Decision Tree & (b) SVM\\
\end{tabular}
	\caption{Effectiveness of Private Feature Selection on US dataset with $\epsilon=0.1$. y-axis: misclassification rate.}\label{fig:fs-comparison}
\end{figure*}

\subsection{Analyses of Sources of Errors}

\begin{figure*}[!htb]
\begin{tabular}{cc}

    \multicolumn{2}{c}{\hspace{-0.6cm}\includegraphics[width = \textwidth]{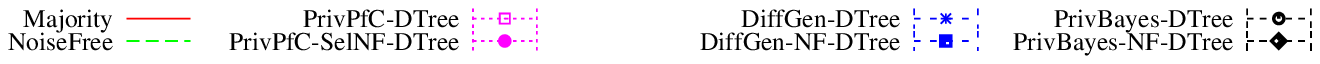}}\\
	\includegraphics[width = 3.0in]{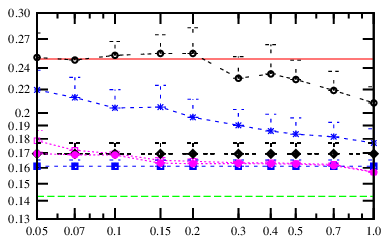}
	& \includegraphics[width = 3.0in]{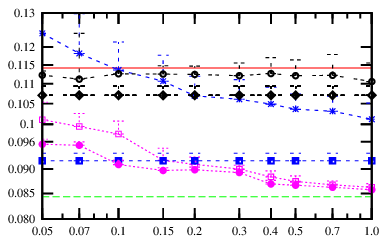}\\
	(a) Adult & (b) Bank\\
	\includegraphics[width = 3.0in]{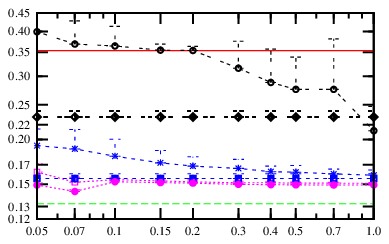}
	& \includegraphics[width = 3.0in]{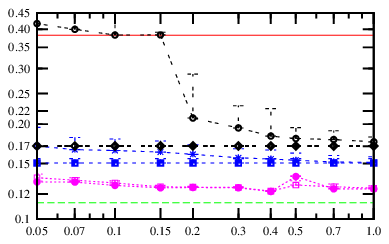} \\
	(c) US & (d) BR\\

\end{tabular}
	\caption{Analyses of \pfc, \dgen and \pbayes by decision tree classification. x-axis: privacy budget $\epsilon$ in log-scale. y-axis: misclassification rate in log-scale.}\label{fig:NF-dtree}
\end{figure*}

\begin{figure*}[!htb]
\begin{tabular}{cc}

    \multicolumn{2}{c}{\hspace{-2mm}\includegraphics[width = \textwidth]{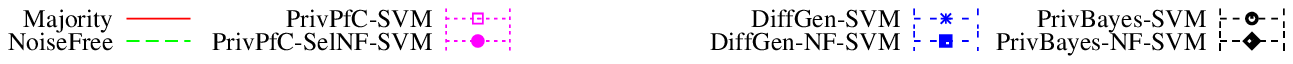}}\\
	\includegraphics[width = 3.0in]{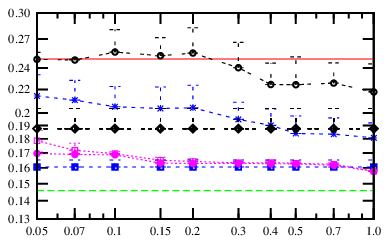}
	& \includegraphics[width = 3.0in]{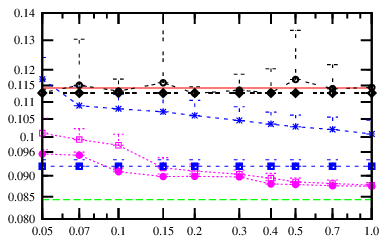}\\
	(a) Adult & (b) Bank\\
	\includegraphics[width = 3.0in]{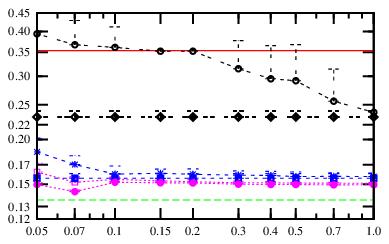}
	& \includegraphics[width = 3.0in]{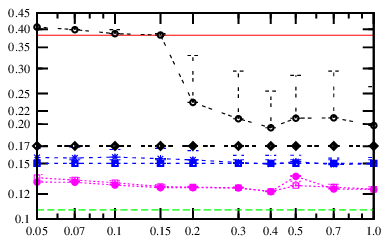} \\
	(c) US & (d) BR\\

\end{tabular}
	\caption{Analyses of \pfc, \dgen and \pbayes by SVM classification. x-axis: privacy budget $\epsilon$ in log-scale. y-axis: misclassification rate in log-scale.}\label{fig:NF-svm}
\end{figure*}

\pfc distributes the privacy budget among three steps, feature selection, grid selection and perturbation, in a 30\%-30\%-40\% way.  When feature selection is not needed, the privacy budget is divided between grid selection and perturbation in a ratio of 3:4.  While these ratios are somewhat arbitrary, we have experimentally evaluated other ratios, allocating between 20\% and 60\% to each step.  We have found that the differences among different budget allocations are minor, so long as the last step receives at least 30\% of the privacy budget.  Even with the worst allocation, which gives 20\% to the last step, \pfc still clearly outperforms competing methods.
We also consider a variant of \pfc, called \pfcSelNF, in which the feature selection step and histogram selection step are noise free and all the privacy budget is used in the histogram perturbation step.  \pfcSelNF is not private; it shows the best one can hope to achieve by optimizing the division among steps.

We have seen that \pfc outperforms the other non-interactive methods such as \dgen and \pbayes.  The key difference in \pfc is that we choose the grid $g$ holistically, instead of arriving at the final grid through a series of decisions.  For example, \dgen iteratively chooses the attributes and ways to partition them, and \pbayes iteratively builds a Bayesian network.
There are two reasons why such an iterative approach does not perform well.  The first is that the decisions made in each iteration may be sub-optimal because of the perturbation necessary for satisfying differential privacy.  The second is that even if the decision made in each iteration is locally optimal, the combination of them is not globally optimal.  To see to what extent the latter factor affects accuracy, we consider noise free variants of them respectively, \dgenNF and \pbayesNF.  In these variants the decisions in each iteration as well as the publishing of counts in the end are performed without any perturbation.  They represent \dgen and \pbayes when the privacy budget $\epsilon$ goes to $\infty$.



Figure~\ref{fig:NF-dtree} and Figure~\ref{fig:NF-svm} report the experimental results of comparing these methods, using Decision Tree and SVM, respectively.  We first observe that while \pfcSelNF indeed outperforms \pfc, the difference is very small, especially for larger $\epsilon$ values in the range.  In fact, on Adult, US, and BR datasets, the difference is barely noticeable when $\epsilon\geq 0.1$.  This suggests that little improvement can be gained to further optimize the division of privacy budget or dataset among determining grid $g$ and publishing noisy histogram.

We also observe that the non-private noise-free version of \pbayes still performs poorly; in fact, it performs significantly worse than the private \pfc.   This suggests that the iterative Bayes network construction approach is not suitable for the purpose of building accurate classifiers.  This is perhaps due in large part to the fact that it is not designed originally to optimize for classification.

The non-private \dgenNF performs similarly to \pfc and \pfcSelNF on the Adult and US datasets.  On the Bank dataset, it is outperformed by \pfc and \pfcSelNF when $\epsilon \geq 0.15$.  On the BR dataset, \dgenNF performs significantly worse than \pfc and \pfcSelNF.  This suggests that the inherent iterative structure of \dgen is suboptimal, even without considering the effect of perturbation.

\section{Conclusion}
In this paper, we have introduced \pfc, a novel framework for publishing data for classification under differential privacy.  As a core part of \pfc, we have introduced a novel quality function that enables the selection of a good ``grid'' for publishing noisy histograms.  We have also introduced a new techinque for privately selecting of most relevant features for classification, which enables \pfc to scale to higher-dimension datasets.  We have conducted extensive experiments on four real datasets, and the results show that our approach greatly outperforms several other state-of-the-art methods for private data publishing as well as private classification.


\bibliographystyle{abbrv}
{\bibliography{privacy}}

\section{Appendix}\label{sec:appen}
\noindent\textbf{Proof of Lemma~\ref{lem:corr-sen}.} 
Without loss of generality, we assume the cell $o_{11}$ is changed by 1.  We use $\mathbf{c}_i$ to denote the number of tuples with $R=r_i$, $i=1,2$. 
\begin{align*}
&\Delta = |\mathrm{corr}(f, D^{\prime}) - \mathrm{corr}(f, D)|\\
		&= \Bigg| \left(o_{11} + 1 - \frac{(o_{11} + o_{12} + 1)(\mathbf{c}_1 + 1)}{\mathbf{c}_1 + \mathbf{c}_2 + 1}\right) - \left(o_{11} - \frac{(o_{11} + o_{12})\mathbf{c}_1}{\mathbf{c}_1 + \mathbf{c}_2}\right)\\
		&+ \sum_{i=2}^{m}\left(o_{i1} - \frac{(o_{i1} + o_{i2})(\mathbf{c}_1 + 1)}{\mathbf{c}_1 + \mathbf{c}_2 + 1}\right) - \left(o_{i1} - \frac{(o_{i1} + o_{i2})\mathbf{c}_1}{\mathbf{c}_1 + \mathbf{c}_2}\right) \Bigg| \\
		& \leq \left|1 - \frac{(o_{11} + o_{12} + 1)(\mathbf{c}_1 + 1)}{\mathbf{c}_1 + \mathbf{c}_2 + 1} + \frac{(o_{11} + o_{12})\mathbf{c}_1}{\mathbf{c}_1 + \mathbf{c}_2}\right|\\
		&+ \sum_{i=2}^{m}\left| - \frac{(o_{i1} + o_{i2})(\mathbf{c}_1 + 1)}{\mathbf{c}_1 + \mathbf{c}_2 + 1} + \frac{(o_{i1} + o_{i2})\mathbf{c}_1}{\mathbf{c}_1 + \mathbf{c}_2}\right|\\
		& = \frac{(\mathbf{c}_1+\mathbf{c}_2)(\mathbf{c}_1+\mathbf{c}_2+1) - (o_{11} + o_{12})\mathbf{c}_2 - (\mathbf{c}_1+1)(\mathbf{c}_1+\mathbf{c}_2)}{(\mathbf{c}_1+\mathbf{c}_2)(\mathbf{c}_1+\mathbf{c}_2+1)}\\
		&+ \frac{\mathbf{c}_2((\mathbf{c}_1-o_{11}) + (\mathbf{c}_2 - o_{12}))}{(\mathbf{c}_1+\mathbf{c}_2)(\mathbf{c}_1+\mathbf{c}_2+1)}\\
		& = \frac{2\mathbf{c}_2((\mathbf{c}_1-o_{11}) + (\mathbf{c}_2 - o_{12}))}{(\mathbf{c}_1+\mathbf{c}_2)(\mathbf{c}_1+\mathbf{c}_2+1)} \\
		&\leq 2. 
\end{align*}
$\Box$

\end{document}